\newcommand{\Ww}{\mathcal W}
\newcommand{\Ll}{\mathcal L}
\newcommand{\red}[1]{{\color{red}#1}}
\newcommand{\set}[1]{\{#1\}}
\newcommand{\nat}{\mathbb{N}}
\let\oldfrac\frac
\renewcommand{\frac}[2]{%
  \mathchoice
    {\oldfrac{#1}{#2}}
    {#1/#2}
    {#1/#2}
    {#1/#2}
}
\newcommand{\mso}{{\sc mso}\xspace}
\newcommand{\wmso}{{\sc wmso}\xspace}
\newcommand{\msoplus}{{\sc mso+}}
\newcommand{\msou}{{\sc mso+u}\xspace}
\newcommand{\eqdef}{\stackrel{\text{def}}=}
\theoremstyle{plain}
\newtheorem*{rep@theorem}{\rep@title}
\newcommand{\newreptheorem}[2]{%
\newenvironment{rep#1}[1]{%
\def\rep@title{#2 \ref{##1}}%
\begin{rep@theorem}}%
{\end{rep@theorem}}}
\newtheorem{theorem}{Theorem}[section]
\newtheorem{lemma}[theorem]{Lemma}
\newtheorem{corollary}[theorem]{Corollary}
\theoremstyle{definition}
\newtheorem{definition}[theorem]{Definition}
\newtheorem{example}[theorem]{Example}
\begin{document}
\title{Extensions of $\omega$-Regular Languages}         

\author{Miko{\l}aj Boja\'{n}czyk}
\address{University of Warsaw, Poland}

\author{Edon Kelmendi}
\address{University of Oxford, UK}

\author{Rafa{\l} Stefa\'{n}ski}
\address{University of Warsaw, Poland}

\author{Georg Zetzsche}
\address{Max Planck Institute for Software Systems (MPI-SWS), Germany}

\begin{abstract}
  We consider extensions of monadic second order logic over $\omega$-words, which are obtained by adding one language that is not $\omega$-regular. 
   We show that if the added language $L$ has a neutral letter, then the resulting logic is necessarily undecidable. A corollary is that the $\omega$-regular languages are the only decidable Boolean-closed full trio over $\omega$-words. 
\end{abstract}

\maketitle

\section{Introduction}
\label{sec:Introduction}
A famous theorem of B\"uchi~\cite[Theorem 2]{Buchi62} says that the monadic second-order theory of $(\omega,<)$ is decidable. What can  be added to this logic while retaining decidability?  This question has seen a lot of interest, and we begin by discussing some of the existing results. 

\paragraph*{What predicates can be added?} The first natural idea is to add predicates beyond the order $<$, e.g.~a unary predicate for the primes, or a binary addition function. 
This idea was pursued  already by Robinson in~\cite{Robinson1958}, in what is possibly the first published paper  to mention \mso on $(\omega,<)$. This is before B\"uchi's theorem about decidability of \mso, and even before the decidability results about weak \mso of B\"uchi~\cite[Corollary 1]{Buchi60}, Elgot~\cite[Corollary 5.8]{Elgot61} and Trakhtenbrot~\cite{trakthenbrot}. After  describing \mso,  which he credits to Tarski's lectures, Robinson    shows that adding the doubling function $n \mapsto 2n$ to \mso results in an undecidable logic~\cite[p.242]{Robinson1958}. Other examples of unary functions that lead to undecidability were given by  Elgot and Rabin~\cite[Section 1]{ElgotRabin66}. One of these examples is that  \mso becomes  undecidable after adding any function $f$ such that $f^{-1}(n)$ is infinite for all $n$. This result was strengthened by Siefkes~\cite[Theorem 5]{siefkes1971undecidable} who showed that it is enough that $f^{-1}(n)$ is infinite for all $n$ with certain periodicity properties, and then by Thomas~\cite[Theorem 1]{thomas1975note} who showed that it is enough for $f^{-1}(n)$ to be infinite for infinitely many $n$.  Another example of undecidability is \mso extended with any unary function  $f$ that is monotone and satisfies $f(n+1)>f(n)+1$ for infinitely many $n$, see~\cite[Theorem 2]{thomas1975note}. This line of research is summarised in~\cite{RabinovichThomas2006} as follows: ``for most examples of natural functions or binary relations it turned out that the corresponding monadic theory is undecidable, usually shown via an interpretation of first-order arithmetic''.

The undecidability issues mentioned above are avoided if one considers unary predicates.  The first examples of this kind were given by Elgot and Rabin, who showed that \mso remains decidable after adding unary predicates for the factorials, or  squares, or cubes, etc.~see~\cite[Theorem 4]{ElgotRabin66}). 
Following this result,  a lot of attention has been devoted to identifying the  unary predicates that keep \mso decidable. An equivalent phrasing of this question is: which $\omega$-words have a decidable \mso theory?
 An interesting example is the Thue-Morse word;  its \mso theory is decidable, which follows from~\cite[Theorem 3]{muchnik03_almos_period_sequen}.  A general classification of $\omega$-words with a decidable \mso theory was given by Semenov in~\cite[p.~165]{Semenov84b}, this line of research was continued in ~\cite{carton2002monadic,  RabinovichThomas2006}. It is worth pointing out that the classification can be hard to apply to some specific cases; an important one being the case of  prime numbers. It is unknown if \mso extended with a predicate for the prime numbers has a decidable theory; if this were the case  then one could use the algorithm to decide if there are infinitely many twin primes\footnote{This theory is known to be decidable if one assumes  Schinzel's Hypothesis --  a conjecture  from number theory which  implies that there are infinitely many  twin primes~\cite[Theorem 4]{Bateman1993}.}.

 It is worth pointing out that in all of the results discussed above, it makes no difference whether one uses \mso or weak \mso. The undecidability proofs for unary functions in~\cite{ElgotRabin66,thomas1975note,siefkes1971undecidable} use only weak \mso. For the results about unary predicates, it makes no difference if \mso or weak \mso is used, because for every $\omega$-word, its \mso theory is decidable if and only if its weak \mso theory is decidable, which follows from  McNaughton's determinisation theorem~\cite[p.~524]{McNaughton66}\footnote{The equivalence of \mso and \wmso need not hold after  adding non-unary predicates. For example, \wmso with addition can only define languages in the Borel hierarchy, while \mso with addition can easily be shown to contain  the logic \msou that will be discussed later in the paper, and \msou can define languages beyond the Borel hierarchy~\cite[Theorem 2.1]{hummel2012topological}.}. In a sense, one could say that the results discussed above are really about extending weak \mso. This will no longer be true when adding quantifiers and languages.

 \paragraph*{What quantifiers can be added?} Another line of research concerns adding new quantifiers. 
 If the added quantifier has some implicit arithemetic, such as the H\"artig quantifier~\cite{herre1991hartig}, which expresses the existence of two sets of equal size with a given property, then \mso immediately becomes undecidable. This follows directly from Robinson's result about $n \mapsto 2n$, and it is also discussed in more detail in~\cite[Theorem 14]{klaedtkerues2003}. However, there are quantifiers which describe only topological or asymptotic behaviour, and for such quantifiers proving undecidability  can be much harder. 
 One example of an asymptotic  quantifier is the bounding quantifier from~\cite{bojanczykBoundingQuantifier2004}, which expresses the property ``$\varphi(X)$ is true for finite sets $X$ of unbounded size''~\cite{bojanczykBoundingQuantifier2004}. The resulting logic, called \msou, is undecidable~\cite[Theorem 1.1]{bojanczyk2016}. However, it is close to the decidability border; in particular weak \mso with the bounding quantifier is decidable~\cite[Theorems 3 and 5]{bojanczykWeakMSOUnbounding2011}; and the same is true for its  variants and extensions~\cite[Theorems 11 and 13]{bojanczykDeterministicAutomataExtensions2009}. Other extensions of \mso with asymptotic quantifiers were proposed by Michalewski, Mio and Skrzypczak in~\cite{michalewski2016measure,mioskmi2018}, including a quantifier related to  Baire category and a quantifier related to  probability. The Baire quantifier does not add to the expressive power of \mso~\cite[Theorem 4.1]{mioskmi2018}. On the other hand, the  probability quantifier leads to an undecidable logic~\cite[Theorem 1]{michalewski2016measure}, because it can express the undecidable problem of checking if a probabilistic B\"uchi automaton accepts some word with nonzero probability~\cite[Theorem 7.2]{BaierBertrandGrosser2012}. The theme for quantifiers seems to be that adding a well-behaved quantifier to (non-weak) \mso either does not change the expressive power, or leads to an undecidable logic; but in the latter case the undecidability proof can be hard.

 \paragraph*{What languages can be added?} We now turn to the final kind of feature that can be added to \mso, namely languages. This is the main topic of this paper. A language $L \subseteq \set{a,b}^\omega$ over a binary alphabet can be viewed as a second-order unary predicate $L(X)$, which inputs a set $X \subseteq \omega$ and returns true if the language $L$ contains the word where positions from $X$ have label $a$ and the remaining positions have label $b$. This can be generalized to alphabets with $k$ letters: Then, the predicate inputs $k$ sets and returns true if the $k$ sets form a partition of $\omega$ and $L$ contains the word encoding this partition. Let us write \msoplus$L$ for the extension of \mso which has the predicate described above.
 
 Another equivalent way of describing the logic \msoplus$L$ uses closure properties of languages. A folklore fact about \mso is that  existential monadic quantification is the same as taking the image of a language under a letter-to-letter homomorphism, see~\cite[p.~2]{Rabin69} or~\cite[Section 2.3]{thomasLanguagesAutomataLogic1997}.  It follows that  \msoplus$L$ is exactly  the smallest class of languages of $\omega$-words which contains $L$ and  all $\omega$-regular languages, and which is closed under (a) Boolean combinations; (b)
 images of letter-to-letter homomorphisms; and (c) inverse images of letter-to-letter homomorphisms.  We will return to this language theoretic approach in Section~\ref{sec:cones}.

 \begin{example}\label{ex:singleton}
  Suppose that $L \subseteq \Sigma^\omega$ is a singleton language, i.e.~it contains only one word $w$. Then \msoplus$L$ has  the same expressive power as \mso extended with  unary predicates  
  \begin{align*}
  P_a = \set{n \in \omega : \text{the $n$-th letter of $w$ is $a$} } \quad \text{for $a \in \Sigma$.}
  \end{align*}
  Therefore, in the case of singleton languages \msoplus$L$, corresponds to the \mso extensions with unary predicates that were studied in~\cite{ElgotRabin66,thomas1975note,Semenov84b,carton2002monadic,RabinovichThomas2006}.
\end{example}

\begin{example}\label{ex:u-language}
  Define $U \subseteq \set{a,b}^\omega$ to be the $\omega$-words where blocks  of $a$'s have unbounded size:
  \begin{align*}
  U \eqdef \set{a^{k_1}ba^{k_2} \cdots : \limsup k_n = \infty}.
  \end{align*}
  In~\cite[Theorem 1.3]{bojanczykUndecidabilityMSOUltimately2018} it is shown that adding the language $U$ to \mso gives exactly the logic \msou. Hence it is unambiguous to write \msou, with both meanings (adding a quantifier or a language) being equivalent. As mentioned before, this logic is undecidable.
\end{example}
 
\begin{example}\label{ex:ultimately-periodic}
  Consider the   ultimately periodic words, i.e.
  \begin{align*}
  P = \set{wv^\omega : w,v \in \set{a,b}^* \text{ and $v$ is nonempty}}.
  \end{align*}
  It is not hard to see that \msoplus$P$ can express the language $U$ from Example~\ref{ex:u-language}, and therefore this logic is undecidable~\cite[Theorem 1.4]{bojanczykUndecidabilityMSOUltimately2018}. In contrast, adding $P$ to weak \mso yields a decidable logic~\cite[Theorem 13]{bojanczykDeterministicAutomataExtensions2009}.
\end{example}

Our goal in this paper is to classify the languages $L$ such that \msoplus$L$ is undecidable. By the discussion in  Example~\ref{ex:singleton}, this project is at least as difficult as classifying the $\omega$-words with a decidable \mso theory. However, in the spirit of ``asymptotic'' conditions, we restrict attention to languages which have a neutral letter, which means that there is a letter in the alphabet, denoted by $1$,  such that 
\begin{align*}
w_1 1 w_2 1 \cdots \in L \qquad\Leftrightarrow\qquad w_1 w_2 \cdots \in L
\end{align*}
holds for every words $w_1,w_2,\ldots \in \Sigma^*$ where infinitely many $w_i$ are nonempty\footnote{In the  definition of neutral letters, we require that the language is stable under  inserting or deleting infinitely many neutral letters. However, this also implies that the language is stable under inserting or deleting finitely many neutral letters.}. We now state the main theorem of this paper. 
\newcommand{\thmmso}{ If   $L \subseteq \Sigma^\omega$ has a neutral letter and is not definable in \mso, then \msoplus$L$ is undecidable.}
\begin{theorem}
  \label{th:mso el}
  \thmmso
\end{theorem}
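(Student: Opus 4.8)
The plan is to show that from a neutral-letter language $L$ that is not \mso-definable, one can interpret the language $U$ of Example~\ref{ex:u-language} (or the ultimately periodic words $P$), thereby reducing to the known undecidability of \msou. The starting point is that non-\mso-definability should manifest, via a Myhill--Nerode / Ramsey-style analysis, as a failure of some finiteness condition on the ``behaviour'' of $L$ along a sequence of contexts. Concretely, I would first set up an appropriate notion of residual: for finite words $u$ and $\omega$-words $w$, look at when $u w \in L$, and more generally at two-sided contexts built from finite words. The neutral letter is exactly what lets us glue such pieces together freely, padding with $1$'s so that the combinatorics behaves like that of a language closed under the relevant monoid action.

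\medskip

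The key steps, in order, would be: (1) Formalize a congruence-type equivalence on finite words induced by $L$ together with the neutral letter, and observe that $L$ is \mso-definable iff this equivalence has finite index \emph{and} the induced $\omega$-semigroup / Wilke algebra structure is finite --- i.e.\ $L$ is recognized by a finite $\omega$-semigroup. So non-\mso-definability gives us an \emph{infinite} such structure, witnessed by an infinite sequence of pairwise non-equivalent finite words $u_1, u_2, \dots$, or by an infinite sequence distinguishing iterated powers. (2) Using Ramsey's theorem on the products $u_i u_{i+1} \cdots u_j$, extract a subsequence that is ``idempotent-like'': there are finite words such that all the relevant long products fall into one class, yet the \emph{number of factors} matters --- this is where an unboundedness phenomenon appears. (3) Encode an arbitrary input $a^{k_1} b a^{k_2} b \cdots$ to the $U$-membership problem as an $\omega$-word over $\Sigma$, mapping the $i$-th block of $a$'s of length $k_i$ to a product of $k_i$ chosen building blocks, separated by neutral-letter padding and by the images of $b$; arrange, using the structure from step (2), that membership in $L$ of the encoded word is equivalent to $\limsup k_n = \infty$. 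Since this encoding is an \mso-interpretation (letter-to-letter homomorphic images and preimages plus Boolean operations, exactly the closure operations of \msoplus$L$ discussed before Example~\ref{ex:singleton}), it shows $U$, hence \msou, reduces to \msoplus$L$, which is undecidable by Example~\ref{ex:u-language}.

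\medskip

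I expect the main obstacle to be step (2)--(3): turning ``the recognizing $\omega$-semigroup is infinite'' into a \emph{usable} unboundedness pattern. Infinitude of the syntactic $\omega$-semigroup is a purely qualitative statement, and there are several ways it can fail to be finite (infinitely many $\sim$-classes of finite words; infinitely many behaviours of $\omega$-powers $u^\omega$; infinitely many ``mixed'' products $u v^\omega$). In each case one must produce not just distinct classes but a \emph{uniform family of contexts} that distinguishes them in a way that can be driven by a single integer parameter $k_i$ and is robust under the neutral-letter gluing --- otherwise one cannot build the homomorphic encoding. Handling the case analysis, and in particular ensuring that the pumping behaviour one extracts genuinely tracks $\limsup$ of block lengths rather than some coarser or finer quantity, is the delicate part; one likely needs a careful Ramsey argument (possibly an infinite-Ramsey or a Hales--Jewett--flavored pigeonhole) on products of building blocks, together with the observation that the neutral letter lets us freely ``dilute'' a word so that only the multiset of non-neutral factors, and ultimately only their count, survives.

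\medskip

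A secondary point worth isolating is that the reduction must be an honest \mso-interpretation with letter-to-letter homomorphisms: the input $a^{k_1} b a^{k_2} b \cdots$ is already an $\omega$-word, and the encoding that replaces $a$ by a block and pads with $1$'s is \emph{not} letter-to-letter. The fix is to first note (as in Example~\ref{ex:ultimately-periodic}) that it suffices to reduce from $U$ in a form where the lengths are spread out, or to pre-process within \mso so that the required blow-up is bounded and can be absorbed; alternatively one replaces $U$ by an \mso-equivalent variant tailored to whatever building blocks step (2) hands us. This bookkeeping is routine once the combinatorial core of steps (2)--(3) is in place, so I would state it as a lemma and defer its verification.
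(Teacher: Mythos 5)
There is a genuine gap, and it sits exactly where the paper identifies the main difficulty. Your steps (1)--(2) presuppose that non-definability of $L$ in \mso yields an \emph{infinite} congruence-type structure on finite words --- infinitely many pairwise inequivalent words, or infinitely many behaviours of powers --- to which you can then apply Ramsey and extract building blocks. For $\omega$-languages this Myhill--Nerode-style starting point is simply not available: the right congruence and the Arnold congruence of a non-$\omega$-regular language can both have \emph{finite} index (the language $U$ itself, and its neutral-letter version $U'$, are examples: the Arnold congruence of $U$ has two classes), and there is in general no syntactic $\omega$-congruence, i.e.\ no unique coarsest equivalence satisfying the recognition condition, so "the induced $\omega$-semigroup" of step (1) is not a well-defined object one can inspect. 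What non-regularity actually gives (via the $\omega$-congruence characterisation) is only the \emph{non-existence of any} finite-index equivalence satisfying the infinitary condition~\eqref{eq:recognising-congruence} --- a statement quantifying over all equivalences, not an infinite family of distinguishable words. Consequently your step (2) cannot even get started on legitimate instances of the theorem such as $L=U'$, and step (3) is strictly stronger than what the theorem requires: you ask for fixed building blocks and a substitution-like encoding under which membership of a \emph{single} encoded word in $L$ (up to Boolean/regular corrections) is equivalent to $\limsup k_n=\infty$; there is no reason such a rigid reduction exists for an arbitrary non-$\omega$-regular $L$.

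The paper circumvents this by never naming any witnesses of non-regularity. It defines $U$ inside \msoplus$L$ by a formula with genuine quantifier alternation --- the congruence game: Spoiler universally picks an infinite family of disjoint intervals, words $w_i$ fitting into them, and a subsequence; Duplicator existentially picks $a$-labelled intervals of the input and words $v_i$ fitting into those; the winning condition applies the $L$-predicate to the two quantified colourings, with the neutral letter used to pad the intervals (this is the only place the neutral-letter hypothesis enters). Correctness is then proved game-theoretically: if the input has bounded $a$-blocks but Duplicator still won, Duplicator's responses come from a finite set, their kernel gives a finite-index equivalence satisfying~\eqref{eq:recognising-congruence}, and Lemma~\ref{lem:congruence} upgrades it to an $\omega$-congruence, contradicting non-regularity via Theorem~\ref{thm:omega-congruence}. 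So the universal quantification over Spoiler's moves is what encodes "every finite-index equivalence fails to recognise $L$", replacing the explicit infinite structure your plan depends on. If you want to salvage your approach, you would have to either prove the (dubious) existence of the rigid encoding of step (3) for every non-$\omega$-regular neutral-letter language, or accept a definition of $U$ with quantifier alternation --- at which point you are essentially reconstructing the congruence game.
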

In the above theorem, by undecidable we mean that there is no algorithm which decides the sentences of \msoplus$L$ that are true in $(\omega,<)$. Since \mso can quantify over words, this is the same as saying that satisfiability is undecidable for \msoplus$L$ for $\omega$-words. The proof of Theorem~\ref{th:mso el} will be given in Section~\ref{sec:congruence game}.

\begin{example}
  \label{ex:u-neutral}
  Define $U' \subseteq \set{a,b,1}^\omega$ to be the words such that eliminating all $1$'s gives a word in the language $U$  from Example~\ref{ex:u-language}. We claim that the logic is obtained by extending \mso with (a) the bounding quantifier; or (b) the language $U$; or (c) the language $U'$. The equality of (a) and (b) was discussed in Example~\ref{ex:u-language}. The language $U'$ can be defined using the bounding quantifier, hence the inclusion (c) $\subseteq$ (a). The language $U$ is the intersection of $U'$ with language of words that do not contain the letter $1$, hence the inclusion (b) $\subseteq$ (c). The equality of these three logics is discussed in more detail in~\cite{bojanczykUndecidabilityMSOUltimately2018}.
\end{example}

The language $U$ will play an important role in the proof of Theorem~\ref{th:mso el}. We will show that if $L$ is not definable in \mso and contains a neutral letter, then $U$ is definable in \msoplus$L$. Undecidability will then follow by Theorems~\cite[Theorem 1.1]{bojanczyk2016} and~\cite[Theorem 1.3]{bojanczykUndecidabilityMSOUltimately2018}. In this sense, $U$ is the simplest undecidable extension of \mso.

The paper is structured as follows. In Section~\ref{sec:cones}, we discuss a version of our main theorem for finite words, which was proved by Zetzsche et al.~in~\cite{zetzsche16_boolean_closed_full_trios_ration_kripk_frames}. Like~\cite{zetzsche16_boolean_closed_full_trios_ration_kripk_frames}, we prove our main theorem using  syntactic congruences, and therefore  Section~\ref{sec:congruences} is devoted to a discussion of syntactic congruences for $\omega$-languages. In Section~\ref{sec:congruence game}, we prove our main result, and in Section~\ref{sec:full trios} we show that the main theorem implies that the $\omega$-regular languages are the only Boolean-closed full trios that are decidable. 

\section{Finite words}
\label{sec:cones}
In this section, we describe the starting point for our work, which is a theorem by Zetzsche et al., which says that the regular languages of finite words are the only decidable Boolean-closed full trio. To define full trios\footnote{Full trios are sometimes called {\em cones} in formal languages literature, they are meant to be a formalisation of robust classes of languages.}, recall that a \emph{homomorphism}  is a function
\begin{align*}
h : \Sigma^* \to \Gamma^*  \qquad \text{such that }h(wv)=h(w)h(v).
\end{align*}
Define the \emph{arithmetic hierarchy}, see~\cite[Section 2]{zetzsche16_boolean_closed_full_trios_ration_kripk_frames}, to be the least class of languages of finite words that contains all recursively enumerable languages, and which is closed under complementation and homomorphic images. 
\begin{theorem}\label{thm:finite}\cite[Corollary 3.2]{zetzsche16_boolean_closed_full_trios_ration_kripk_frames} Let $\Ll$ be a class of languages of finite words which is a full trio, i.e.~it is closed under:
  \begin{enumerate}
    \item images under homomorphisms; and
    \item inverse images under homomorphisms; and
    \item intersections with  regular languages. 
    \end{enumerate}
    If $\Ll$ is additionally Boolean-closed (closed  under union and complementation) and it  contains at least one non-regular language, then it contains the arithmetic hierarchy.
\end{theorem}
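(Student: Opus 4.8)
The plan is to reduce Theorem~\ref{thm:finite} to a single concrete membership: \textbf{it suffices to prove that $\{a^nb^n : n\geq 0\}\in\Ll$}. Granting this, one argues as follows. Fix a Turing machine $M$ and let $\mathrm{VALC}(M)$ be the usual language of encodings $\#C_0\#C_1\#\cdots\#C_k\#$ of accepting runs of $M$, where the input letters appearing in the initial configuration $C_0$ are written with distinguishing marks. The complement of $\mathrm{VALC}(M)$ is the union of a regular language (wrong block format, invalid initial configuration, non-accepting final configuration) and finitely many languages of the shape ``some consecutive pair $C_i\#C_{i+1}$ violates the local transition rule of $M$ at some tape position''. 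Each of these is accepted by a nondeterministic automaton that guesses the offending block and position, counts \emph{up} to that position while scanning $C_i$ (remembering the relevant letters), then counts \emph{down} while scanning $C_{i+1}$ to locate the position and detect the error; being a one-counter behaviour of the ``up, then down'' kind, such a language lies in the full trio generated by $\{a^nb^n\}$ --- it is obtained from $\{a^nb^n\}$ by an inverse morphism, an intersection with a regular language, and a morphism. Since $\Ll$ is a full trio it contains $\{a^nb^n\}$, these auxiliary languages, and all regular languages; being Boolean-closed it then contains $\mathrm{VALC}(M)$; and the morphism erasing everything except the marked input letters sends $\mathrm{VALC}(M)$ onto $L(M)$. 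Letting $M$ vary over all Turing machines, $\Ll$ contains every recursively enumerable language, and as $\Ll$ is closed under complementation and homomorphic images it then contains the arithmetic hierarchy.

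It remains to obtain $\{a^nb^n\}\in\Ll$ from a non-regular $K\in\Ll$. The first step puts the \emph{syntactic equality} of $K$ into $\Ll$. Write $\sim_K$ for the syntactic congruence of $K$ and put $\mathrm{EQ}_K = \{x\#y : x\sim_K y\}$ and $\mathrm{NEQ}_K = \{x\#y : x\not\sim_K y\}$. Over an alphabet carrying four tagged copies of $\Sigma$, encode words of the form $u\,\#_1\,x\,\#_2\,v\,\#_3\,y$; intersecting with a regular ``format'' language and pulling $K$ back along a morphism lets one impose $uxv\in K$, pulling $\overline K\in\Ll$ back along another morphism lets one impose $uyv\notin K$, and the morphism that erases $u$, $v$ and the outer separators projects the result onto $x\#_2\,y$; together with the symmetric construction (imposing $uxv\notin K$ and $uyv\in K$) this yields exactly $\mathrm{NEQ}_K$, so $\mathrm{NEQ}_K\in\Ll$, and complementing inside the format language $\Sigma^*\#\Sigma^*$ gives $\mathrm{EQ}_K\in\Ll$. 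Since $K$ is not regular, $\sim_K$ has infinite index, so $\mathrm{EQ}_K$ is the equality relation of an infinite quotient of $\Sigma^*$.

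The final --- and genuinely hardest --- step is to carve $\{a^nb^n\}$ out of $\mathrm{EQ}_K$. The idea is to find a \emph{regular} set $S\subseteq\Sigma^*$ of representatives along which $\sim_K$ restricts to the identity (and which carries a length parameter): then $\mathrm{EQ}_K\cap(S\#S) = \{s\#s : s\in S\}$, and a morphism followed by routine operations turns this into $\{a^nb^n\}$. When the syntactic monoid of $K$ has an element of infinite order --- a word $y$ with $y^0,y^1,y^2,\dots$ pairwise $\sim_K$-inequivalent --- one takes $S = y^*$ and is done. The obstacle is that this need not be available: there are non-regular languages --- for instance the word problem of an infinite finitely generated Burnside group, whose syntactic monoid is the group itself and hence infinite but torsion --- in which \emph{every} word has all sufficiently high powers equivalent, so no ``lasso'' scaffold exists and the infinitely many $\sim_K$-classes are not exposed by any uniform pumping. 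Handling this requires a genuine combinatorial (Ramsey/compactness-type) extraction of an infinite, sufficiently homogeneous sub-family of $\sim_K$-classes together with a regular family of representatives and uniform separating contexts. This is the technical heart of Theorem~\ref{thm:finite}, and is precisely where one needs to work with syntactic congruences directly; the analogous difficulty reappears, over $\omega$-words, in the proof of our main theorem.
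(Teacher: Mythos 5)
Your reduction of the theorem to the single membership $\set{a^nb^n : n \ge 0} \in \Ll$ is sound, and your construction of $\mathrm{NEQ}_K$ and $\mathrm{EQ}_K$ via inverse homomorphisms, intersection with a format language, complementation and erasing projections matches the paper's first step (the paper phrases it with rational relations via Nivat's theorem and uses the right congruence rather than the two-sided one; the paper also targets $\set{\#^n\red{\#}^n}$ and two-counter machines instead of $\set{a^nb^n}$ and Turing-machine computation histories, but these differences are cosmetic). The genuine gap is exactly where you flag it: the extraction of a language of infinite index from $\mathrm{EQ}_K$ is never carried out. Your proposed route --- a regular cross-section $S$ on which $\sim_K$ restricts to equality, with a Ramsey-type argument to cover the torsion case --- is left entirely unexecuted, and your own Burnside example shows why the lasso $S=y^*$ is unavailable in general; asserting that "a combinatorial extraction" then works is not a proof, and it is far from clear that an infinite-index congruence must admit a regular set of representatives that full-trio operations can certify.

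The paper sidesteps representatives altogether. It builds the language $L_2$ of words $w_1\#\cdots\#w_n\#v_1\red{\#}\cdots\red{\#}v_m\red{\#}$ subject to: $w_1\sim v_1$, $w_n\sim v_m$, the $w_i$ pairwise inequivalent, the $v_j$ pairwise inequivalent, and $w_i\sim v_j \Rightarrow w_{i+1}\sim v_{j+1}$. Each condition is a universally quantified statement about pairs of factors, hence expressible by applying the $\forall$-form of the rational-relation rule (complementation plus a rational transduction) to $\mathrm{EQ}$ or $\mathrm{NEQ}$; no canonical representatives are chosen. The propagation rule forces $w_i\sim v_i$ inductively, and together with $w_n\sim v_m$ and the pairwise inequivalence of the $v_j$ this forces $m=n$, while infinite index guarantees that every $n$ is realized by some choice of pairwise inequivalent $w_i$. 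Erasing the $w_i,v_j$ then yields $\set{\#^n\red{\#}^n : n\ge 1}$, which plays the role of your $\set{a^nb^n}$. You should either adopt this device or supply the missing combinatorial argument in full; as written, the heart of the theorem is diagnosed but not proved.
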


In the above theorem, there is no assumption on neutral letters. This is because  condition 2 deprecates the assumption,  since a neutral letter can be added to any language by taking the inverse image under the homomorphism which eliminates the neutral letter. The closure properties used in Theorem~\ref{th:mso el} are weaker, and hence the assumption on neutral letters is needed.
As a warm-up for the case of $\omega$-words, we give below a proof sketch for the above theorem. 
\begin{proof}[Proof sketch]
  The  proof uses rational relations~\cite[p.236]{Eilenberg74}.  Recall that a rational relation is a binary relation on words 
  that is recognised by a nondeterministic automaton where each transition is labelled by a pair (input word, output word), with both words being possibly empty. By Nivat's theorem    (Propositions 1 and 2 in \cite{nivat}), if a language class $\Ll$ is a full trio, then it is closed under images under rational relations, which can be visualised as the following reasoning rule:
  \begin{align*}
    \frac{K \subseteq \Sigma^* \text{ is in $\Ll$} \qquad R \subseteq \Sigma^* \times \Gamma^* \text{ is a rational relation} }{\set{v \in \Gamma^* : \exists w \in L \text{ with }(w,v) \in R} \text{ is in $\Ll$}}
    \end{align*}
    Because $\Ll$ is closed under complementation, we can also use a variant of the above rule where $\forall$ is used instead of $\exists$ in the conclusion of the rule (i.e.~below the line). 
  
  The key idea is to use the closure properties to formalise the syntactic right congruence of the language. Let $L \subseteq \Sigma^*$  be some non-regular language in $\Ll$, which exists by assumption, and  let $\sim$ be its syntactic right congruence, i.e.~the equivalence relation  defined by
  \begin{align*}
  u \sim u' \quad \eqdef \quad \forall v \in \Sigma^*\ \ \   uv \in L \Leftrightarrow u'v \in L.
  \end{align*}
  By the Myhill-Nerode theorem, $\sim$ has infinite index, i.e.~infinitely many equivalence classes.  
  Using the reasoning rule with rational relations, one  shows that $\Ll$ contains the language 
  \begin{align*}
  L_1 = \set{u\# u' : u \sim u'},
  \end{align*}
  where  $\#$ is a fresh separator symbol.
  Consider now two separator symbols $\#$ and $\red{\#}$. Define $L_2$ to be the language 
  \begin{align*}
  \set{w_1\# \cdots \# w_n  \# v_1\red{\#} \cdots \red{\#} v_m \red{\#} : \begin{cases}
    w_1 \sim v_1 \\
    w_n \sim v_m \\
    w_i \not \sim w_j \text{ for $i \neq j$}\\
    v_i \not \sim v_j \text{ for $i \neq j$}\\
    w_i \sim v_j \Rightarrow w_{i+1} \sim v_{j+1}
  \end{cases}}.
\end{align*} Using the closure properties, one shows $L_2 \in \Ll$.
A short analysis of the conditions defining $L_2$ reveals that every word in $L_2$ satisfies
\begin{align*}
m=n \qquad \text{and} \qquad w_1 \sim v_1, w_2 \sim v_2,\ldots,w_n\sim v_n.
\end{align*}
Furthermore, since  $\sim$ has infinitely many equivalence classes, it follows that $n=m$ can be arbitrarily large.   By projecting away the words $w_i, v_i$ using a homomorphisms, it follows  that $\Ll$ contains the language
\begin{align*}
L_3 = \set{\#^n \red{\#}^n : n \in \set{1,2,\ldots}}.
\end{align*}
A string encoding of runs of two-counter machines, see~\cite[Theorem 2]{DBLP:journals/jcss/HartmanisH70}, can be used to show that  $\Ll$ contains every recursively enumerable language. The arithmetic hierarchy follows, by closure of $\Ll$ under homomorphic images and complementation.
\end{proof}

\section{Congruences for $\omega$-words}
\label{sec:congruences}
Like in Theorem~\ref{thm:finite},  the proof of Theorem~\ref{th:mso el} also uses congruences. However, there are several issues with congruences for $\omega$-words, which mean that some new ideas are needed. The main problem is that there is no good notion of syntactic congruence for  languages of $\omega$-words.  

 We begin by discussing several existing approaches to congruences for $\omega$-words, see also~\cite{MalerStaiger93}. In all cases, we begin with a language $L \subseteq \Sigma^\omega$, and use it to define an equivalence relation on finite words. The first candidate is  the \emph{right congruence}, which identifies two finite words $u,u' \in \Sigma^*$ if 
\begin{align*}
\forall v \in \Sigma^\omega\qquad \ uv \in L \Leftrightarrow u'v \in L.
\end{align*}
This right congruence does not characterise the $\omega$-regular languages, although it does have some use, for example in automata learning~\cite{DBLP:journals/corr/abs-1809-03108}. There could be finitely many equivalence classes despite a language not being $\omega$-regular. 
For example, every prefix independent language will have one equivalence class of right congruence, but there are prefix independent languages which are not $\omega$-regular, such as
\begin{align*}
\set{w (a^nb)^\omega : w \in \set{a,b}^* \text{ and $n$ is prime}}.
\end{align*}
By induction on formula size one can show that if $L$ has a right congruence of finite index, then the same is true for every language definable in \msoplus$L$; which shows that right congruences will not be useful for our main result. Similar problems arise for the two-sided version of right congruence. 

A more useful congruence for $\omega$-words uses  two-sided environments and $\omega$-iteration; this leads to the  \emph{Arnold congruence}~\cite[Section 2]{Arnold85b}, which identifies $u,u'$ if 
\begin{align*}
   \land \begin{cases} 
    \forall w \in \Sigma^*\ \forall v \in \Sigma^\omega \qquad wuv \in L \Leftrightarrow wu'v \in L\\
      \forall w,v \in \Sigma^*\qquad w(uv)^\omega \in L \Leftrightarrow w(u'v)^\omega \in L.
   \end{cases}
  \end{align*}
The Arnold congruence still does not characterise the $\omega$-regular languages. For example, the language $U$ is not $\omega$-regular, but it has two equivalence classes under Arnold congruence: words which contain $b$, and words which do not contain $b$.

Fortunately, there is a successful characterisation of $\omega$-regular languages via congruences. This characterisation is stated below, and it corresponds to $\omega$-semigroups.
\begin{theorem}\cite[Theorem 7.5]{DBLP:books/daglib/0016866}\label{thm:omega-congruence}
  A language $L \subseteq \Sigma^\omega$ is $\omega$-regular if and only if there is an equivalence relation $\sim$ which has finite index and satisfies the following conditions for all sequences of finite words $u_i$:
  \begin{eqnarray}
    \label{eq:concatenation-congruence}
   \bigg(\bigwedge_{i\in\set{1,2} } u_i \sim u_i'\bigg) & \Rightarrow &  u_1 u_2 \sim u'_1 u'_2
\\\label{eq:recognising-congruence}
   \bigg(\bigwedge_{i\in\nat } u_i \sim u_i'\bigg)\ \  & \Rightarrow &  \left(u_1 u_2 \cdots \in L \Leftrightarrow u'_1 u'_2 \cdots \in L\right)
    \end{eqnarray}
\end{theorem}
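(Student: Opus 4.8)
The plan is to prove the two implications separately; the forward one --- every $\omega$-regular language admits such a congruence --- is routine, while the converse carries the content. For the forward direction, suppose $L$ is $\omega$-regular and fix a nondeterministic B\"uchi automaton $\Aa$ with state set $Q$ recognising it. To each finite word $u$ I would assign its \emph{transition profile}: the set of pairs $(p,q)\in Q\times Q$ such that $\Aa$ has a run over $u$ from $p$ to $q$, together with the sub-collection of those pairs witnessed by a run that moreover visits an accepting state. Set $u\sim u'$ when $u,u'$ have the same transition profile; this has finite index since there are finitely many profiles. Condition~\eqref{eq:concatenation-congruence} holds because the profile of $u_1u_2$ is determined by those of $u_1$ and $u_2$ (compose runs, propagating the ``saw an accepting state'' bit). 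For condition~\eqref{eq:recognising-congruence}, an accepting run over $u_1u_2\cdots$ decomposes into checkpoint states $q_0,q_1,\dots$ ($q_0$ initial, $q_i$ reached after $u_1\cdots u_i$) together with, for each $i$, a run over $u_i$ from $q_{i-1}$ to $q_i$; since the whole run is accepting and each $u_i$ is finite, infinitely many of those segments visit an accepting state. Replacing each $u_i$ by a $\sim$-equivalent $u_i'$ lets one choose runs over the $u_i'$ with the same endpoints and the same accepting-visit status, and gluing them yields an accepting run over $u_1'u_2'\cdots$; symmetry finishes the argument. (A deterministic Muller automaton, recording for each state the landing state and the set of visited states, works just as well.)

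For the converse, assume $\sim$ has finite index and satisfies \eqref{eq:concatenation-congruence} and \eqref{eq:recognising-congruence}. By \eqref{eq:concatenation-congruence} the quotient $S:=\Sigma^+/{\sim}$ is a finite semigroup and $h\colon\Sigma^+\to S$, $u\mapsto[u]$, is a semigroup morphism, so each class $h^{-1}(s)$ is a regular language of finite words. The key step will be Ramsey's theorem (the infinite version, for $2$-element subsets): given $w=a_0a_1a_2\cdots\in\Sigma^\omega$, colour each pair $i<j$ by $h(a_i\cdots a_{j-1})\in S$; an infinite monochromatic set yields a factorisation $w=u_0u_1u_2\cdots$ with all $u_i$ nonempty, $h(u_i)=e$ for all $i\ge 1$ where $e$ is idempotent, and $h(u_0)=s$ for some $s$ (and if the prefix would be empty one instead takes $s=e$). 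Call such a pair $(s,e)$ a \emph{profile} of $w$; every $\omega$-word has one.

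Condition~\eqref{eq:recognising-congruence} then makes membership in $L$ a function of a profile: if $w=u_0u_1\cdots$ and $w'=u_0'u_1'\cdots$ realise the same profile $(s,e)$, then $h(u_i)=h(u_i')$, hence $u_i\sim u_i'$, for every $i$, so $w\in L\Leftrightarrow w'\in L$. Letting $P\subseteq S\times S$ be the set of profiles realised by words of $L$, one then gets
\begin{align*}
  L \;=\; \bigcup_{(s,e)\in P}\, h^{-1}(s)\cdot\bigl(h^{-1}(e)\bigr)^\omega ,
\end{align*}
since every word of $L$ lies in the term indexed by any of its profiles, and conversely a word with a profile in $P$ lies in $L$ by the previous sentence. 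Each $h^{-1}(s)$ and $h^{-1}(e)$ is regular and (being a subset of $\Sigma^+$) contains no empty word, so each term $h^{-1}(s)(h^{-1}(e))^\omega$ is $\omega$-regular; a finite union of $\omega$-regular languages is $\omega$-regular, hence so is $L$.

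I expect the main obstacle to be the converse, and inside it the Ramsey step producing the idempotent factorisation: this is what converts the purely algebraic hypotheses \eqref{eq:concatenation-congruence}--\eqref{eq:recognising-congruence} into a concrete decomposition of $L$ into finitely many sets of the form $UV^\omega$. The rest is bookkeeping --- keeping every block of a factorisation nonempty, and noting that an empty prefix is absorbed into a profile $(e,e)$ so that no edge case remains.
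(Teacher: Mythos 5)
The paper does not actually prove this statement---it is quoted from the cited reference (Perrin--Pin, Theorem 7.5)---and your argument is essentially the standard proof given there: transition profiles of a B\"uchi automaton yield the congruence for the forward direction, and Ramsey's theorem together with the finite quotient semigroup $S=\Sigma^+/{\sim}$ gives the decomposition of $L$ into finitely many sets $h^{-1}(s)\bigl(h^{-1}(e)\bigr)^\omega$ for the converse; both directions are correct as you present them. The one point worth tightening is the empty-block edge case in the forward direction: since the $u_i$ range over all finite words, a sequence with cofinitely many $u_i$ empty has a finite concatenation and hence lies outside $L\subseteq\Sigma^\omega$, and condition~\eqref{eq:recognising-congruence} still holds for your profile equivalence provided ``visits an accepting state'' is counted excluding the run's starting position, so that the empty word is not identified with words whose runs can contribute accepting visits.
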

We use the name \emph{$\omega$-congruence} for an equivalence relation that satisfies conditions~\eqref{eq:concatenation-congruence} and~\eqref{eq:recognising-congruence} in the above theorem. 
We use the above theorem in our main result. Although promising, the  characterisation in terms of $\omega$-congruences has one important drawback, namely non-uniqueness. For the right congruence, the defining property
\begin{align*}
  \forall v \in \Sigma^\omega\qquad uv \in L \Leftrightarrow u'v \in L.
  \end{align*}
gives a unique equivalence relation, and  hence it makes sense to speak of \emph{the} right congruence. A similar property holds for the Arnold congruence. The uniqueness of the definition of right congruence, and the fact that its definition can be formalised using rational relations, is what drives the proof of Theorem~\ref{thm:finite}. 

In contrast, there is no uniqueness in Theorem~\ref{thm:omega-congruence}, and there cannot be. A language  might not have a unique coarsest $\omega$-congruence (such an equivalence relation is called the \emph{syntactic $\omega$-congruence}). An example is the language $U$, see~\cite[Running Example 2]{bojanczykRecognisableLanguagesMonads2015}. For  $\omega$-regular languages, the syntactic $\omega$-congruence exists and  coincides with  Arnold congruence, see~\cite[Proposition 8.8]{DBLP:books/daglib/0016866}, but  this is not very helpful in our setting, since we want to study congruences for languages that are \emph{not} $\omega$-regular. These are issues that we will need to overcome in the proof of our main result. 

We finish this section  with  a simple observation, which says that condition~\eqref{eq:concatenation-congruence} in Theorem~\ref{thm:omega-congruence} is superfluous. This observation will be useful later on, since condition~\eqref{eq:recognising-congruence} will be easier to formalise. 
   
\begin{lemma}\label{lem:congruence}
  If there is an equivalence relation of finite index which satisfies~\eqref{eq:recognising-congruence}, then there is an equivalence relation of finite index which satisfies both~\eqref{eq:concatenation-congruence} and ~\eqref{eq:recognising-congruence}.
\end{lemma}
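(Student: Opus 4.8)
The plan is to start from an equivalence relation $\approx$ of finite index satisfying only the recognising condition~\eqref{eq:recognising-congruence}, and refine it to a new relation $\sim$ that additionally satisfies the concatenation condition~\eqref{eq:concatenation-congruence}, while keeping finite index. The natural candidate is a "syntactic-style" refinement: declare $u \sim u'$ if for all finite words $w, v$ and all suffixes, prepending/appending them lands in the same $\approx$-class of the resulting infinite-word behaviour. More precisely, I would define $u \sim u'$ to hold when, for every pair of finite words $x, y \in \Sigma^*$, the finite words $xuy$ and $xu'y$ are "context-equivalent" in the sense that substituting either one into any $\omega$-context built from $\approx$-classes gives the same membership answer. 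The cleanest formalisation: let $\sim$ be the coarsest equivalence relation contained in $\approx$ that is a left and right congruence for concatenation of finite words and is compatible with $\approx$ in the appropriate way.

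Concretely I would proceed as follows. First, fix $\approx$ with finite index $k$ satisfying~\eqref{eq:recognising-congruence}. Second, consider the monoid structure: define $u \sim u'$ iff for all $x, y \in \Sigma^*$ we have $xuy \approx xu'y$ — but this alone need not be a congruence for infinite concatenation, so instead I would define $\sim$ via the action on $\approx$-classes. Think of each finite word $u$ as inducing a function $\widehat{u}$ from $(\approx\text{-classes})$ to $(\approx\text{-classes})$, sending the class of $w$ to the class of $wu$ — this is well-defined only after we know $\approx$ is a right congruence, which we do not; so the honest move is to take the coarsest congruence below $\approx$. Formally: let $\sim$ be defined by $u \sim u'$ iff for every decomposition context the $\approx$-classes agree, i.e. iterate the refinement $u \sim_0 u' \Leftrightarrow u \approx u'$ and $u \sim_{n+1} u' \Leftrightarrow (\forall a \in \Sigma:\ au \sim_n au' \text{ and } ua \sim_n u'a)$, then set $\sim\ = \bigcap_n \sim_n$. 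Since each $\sim_n$ refines $\approx$ and there are only finitely many $\approx$-classes, the index of $\sim_n$ is bounded (by a standard Myhill–Nerode-style argument the number of classes is non-decreasing and bounded by $k^{k}$ or similar), so the decreasing chain stabilises and $\sim$ has finite index.

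Third, I would verify the two required properties. Condition~\eqref{eq:concatenation-congruence} is immediate from the construction: $\sim$ is by design a two-sided congruence for concatenation of finite words, so $u_1 \sim u_1'$ and $u_2 \sim u_2'$ give $u_1 u_2 \sim u_1' u_2'$ (compose the one-letter steps). For condition~\eqref{eq:recognising-congruence}: given sequences with $u_i \sim u_i'$ for all $i$, since $\sim$ refines $\approx$ we get $u_i \approx u_i'$ for all $i$, and then~\eqref{eq:recognising-congruence} for $\approx$ gives $u_1 u_2 \cdots \in L \Leftrightarrow u_1' u_2' \cdots \in L$. So the recognising property is simply inherited.

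The main obstacle I anticipate is making the finite-index claim fully rigorous: one must argue that the refinement process $\sim_n$ does not blow up the number of classes without bound. The right way is to observe that $\sim_{n+1}$-classes are determined by a $\approx$-class together with a bounded amount of transition data, and that once $\sim_{n+1} = \sim_n$ the chain is stationary; combined with the finite bound on the number of classes this forces termination at some finite stage $n_0$, and $\sim = \sim_{n_0}$ has finite index. A slicker alternative, if one prefers to avoid the iteration, is to directly take $\sim$ to be the kernel of the homomorphism from $\Sigma^*$ into the (finite) transition monoid of a deterministic automaton tracking $\approx$-classes under prefixing — but since $\approx$ is not assumed to be a right congruence, the iterative definition above is the safe route, and the finite-index bookkeeping is the only genuinely non-trivial point.
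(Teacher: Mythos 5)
There is a genuine gap, and it is fatal: your finite-index claim for the refined relation is false in general. Taking the coarsest two-sided congruence contained in a finite-index equivalence relation $\approx$ does \emph{not} preserve finite index -- the Myhill--Nerode phenomenon points exactly the other way. Concretely, take $L = \emptyset$, so that \emph{every} equivalence relation on $\Sigma^*$ satisfies~\eqref{eq:recognising-congruence} vacuously, and let $\approx$ be the two-class relation ``$u \approx u'$ iff ($u \in K \Leftrightarrow u' \in K$)'' for some non-regular $K \subseteq \Sigma^*$. Your relation $\sim$ (equivalently, $u \sim u'$ iff $xuy \approx xu'y$ for all $x,y$) is then the syntactic congruence of $K$, which has infinite index. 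So the chain $\sim_0 \supseteq \sim_1 \supseteq \cdots$ does not stabilise, and the asserted bound of the form $k^k$ on the number of classes does not exist; the step you flagged as ``the only genuinely non-trivial point'' is precisely where the argument breaks, and no bookkeeping can repair it without using~\eqref{eq:recognising-congruence} in an essential way (note that in this counterexample the lemma's conclusion is still true -- the one-class relation works -- but your construction does not produce a finite-index witness).

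The paper's proof goes in the opposite direction: instead of refining $\approx$ to force~\eqref{eq:concatenation-congruence} (which trivially preserves~\eqref{eq:recognising-congruence} but can destroy finite index), it \emph{coarsens}. By induction on the number of classes: if the relation is not a congruence, pick a violation $u \sim u'$ with, say, $uw \not\sim u'w$, and merge the classes of $uw$ and $u'w$. Merging obviously keeps (indeed decreases) the index, and the work is to check that~\eqref{eq:recognising-congruence} survives the merge: in any pair of infinite products witnessing the new relation, one first replaces every factor in the merged classes by $uw$ or $u'w$ themselves, then splits $uw$ as $u \cdot w$ and $u'w$ as $u' \cdot w$ and applies~\eqref{eq:recognising-congruence} for the old relation to the refined factorisations. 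If you want to salvage a refinement-style argument, you would have to show that~\eqref{eq:recognising-congruence} forces the syntactic refinement of $\approx$ to have finite index, which is not a routine counting argument and is essentially a different (and harder) route than the merging induction.
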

\begin{proof} Induction on the number of equivalence classes in the equivalence relation, call it~$\sim$. In the base case, when $\sim$ has one equivalence class, condition~\eqref{eq:concatenation-congruence} holds vacuously. 
    Consider the induction step. For this proof, it is easier to work with the following equivalent form of~\eqref{eq:concatenation-congruence}:
    \begin{align*}
      u \sim u'  \qquad \Rightarrow \qquad  (uw \sim u'w) \land (wu \sim wu').
    \end{align*}
    If $\sim$ satisfies the above implication, then we are already done. Otherwise,  choose a violation of the implication, i.e.~words $u \sim u'$ which do not  satisfy the conclusion of the implication. 
    By symmetry, assume $uw \not \sim  u'w$. Define  $\approx$ to be the equivalence relation obtained from $\sim$ by merging the equivalence classes of $uw$ and $u'w$. We claim that $\approx$ still satisfies condition~\eqref{eq:recognising-congruence}, and therefore the induction assumption can be applied. 
We visualize~\eqref{eq:recognising-congruence} as  follows:
  \vspace{-0.4cm}
  \begin{center} 
  \begin{tikzpicture}
    \matrix (m) [matrix of math nodes,row sep=1.5em,column sep=1em]
    {
     u_0 & u_1 & u_2 & \cdots & \in L \\
     u'_0 & u'_1 & u'_2 & \cdots & \in L \\
   };
   \path[-stealth]
   (m-1-1) edge[draw=none] node[sloped] {$\approx$} (m-2-1)
   (m-1-2) edge[draw=none] node[sloped] {$\approx$} (m-2-2)
   (m-1-3) edge[draw=none] node[sloped] {$\approx$} (m-2-3)
   (m-1-4) edge[draw=none] node[pos=0.45, sloped] {$\approx$} (m-2-4)
   (m-1-5) edge[draw=none] node[sloped] {$\Leftrightarrow$} (m-2-5);
  \end{tikzpicture} 
\end{center}
\vspace{-0.5cm} By definition of $\approx$ and assumption~\eqref{eq:recognising-congruence} for $\sim$, we can replace
every $u_n$ in the equivalence class of $uw$  by $uw$ and every $u_n$ in the equivalence class of $u'w$  by $u'w$,  without affecting membership in $L$. Therefore we can assume without loss of generality that every $u_n$ is either $uw$, or $u'w$, or a word that is $\sim$-equivalent to neither of these. The same can be done for $u'_n$.  By definition of $\approx$, if $u_n=uw$ then $u'_n$ has to be one of $uw$ or $u'w$. We can now split each $u_n = uw$ into two words $u$ and $w$, likewise for $u_n=u'w$, and then  use again the assumption that $\sim$ satisfies~\eqref{eq:recognising-congruence} to finish the proof. 
\end{proof}

\section{Proof of the main theorem}
\label{sec:congruence game}

In this section we prove Theorem~\ref{th:mso el}. Fix a language $L \subseteq \Sigma^\omega$  that is not $\omega$-regular, and which contains a neutral letter. We will show that the logic \msoplus$L$ is undecidable. 

To prove undecidability, we will show that  \msoplus$L$ contains the language $U$, and therefore undecidability follows thanks to the results about the logic \msou. To explain how $U$ can be defined, we use a game called the \emph{congruence game}.   This game is  played by two players called  Spoiler and Duplicator, and it is parametrised by an $\omega$-word $u \in \set{a,b}^\omega$. The congruence game is designed so that player Duplicator wins if and only if $u \in U$, which means that $u$ has $a$-labelled intervals of unbounded size.  This is achieved as follows.   Roughly speaking, the goal of player Duplicator is to show that from the perspective of the language $L$,  each finite word $w \in \Sigma^*$ is equivalent  to some word $v \in \Sigma^*$ which can fit infinitely often into $a$-labelled intervals in the word $u$. Since the language $L$ is not $\omega$-regular, Duplicator needs  intervals of unbounded size to win.

Define an \emph{interval} to be a finite connected subset of $\omega$, i.e.~it contains all positions between its first and last position. If $W,V$ are intervals, then we write $W<V$ if the last position of $W$ is strictly before the first position of $V$.

\newcommand{\sequbis}[2]{\langle#1 \rangle_{#2}}
\begin{definition}
  [Congruence Game]  The \emph{congruence game for $u \in \set{a,b}^\omega$} is the following game played by two players, called Spoiler and Duplicator.
  \begin{enumerate}
    \item Spoiler chooses an infinite family $\Ww$ of pairwise disjoint    intervals.
\item Duplicator chooses intervals 
\begin{align*}
W_1 < V_1 < W_2 < V_2 < \cdots 
\end{align*}
such that $W_1,W_2,\ldots$ are from $\Ww$ and $V_1,V_2,\ldots$ contain only positions with label $a$ in the word $u$.
\item Spoiler chooses words
\begin{align*}
w_1,w_2,\ldots \in \Sigma^*
\end{align*}
such that $|w_i|<|W_i|$ for every $i \in \set{1,2,\ldots}$.
\item Duplicator chooses words
\begin{align*}
v_1,v_2,\ldots \in \Sigma^*
\end{align*}
such that $|v_i|<|V_i|$ for every $i \in \set{1,2,\ldots}$.
    \item  Spoiler chooses a sequence of natural numbers
    \begin{align*} 
     i_1 < i_2 < \cdots.
    \end{align*}
    \item Duplicator wins the game if and only if: 
       \begin{align*}
         w_{j_1}w_{j_2}w_{j_3}\cdots \in L\qquad \Leftrightarrow \qquad v_{j_1}v_{j_2}v_{j_3}\cdots \in L.
       \end{align*}
  \end{enumerate}
  
\end{definition}

The key result about the congruence game is the following lemma. The lemma does not use the assumption that $L$ contains a neutral letter; this assumption will be used later when formalising the congruence game in \msoplus$L$. 
\begin{lemma}\label{lem:congruence-game}
  Assume that $L \subseteq \Sigma^\omega$ is not $\omega$-regular.  Then
  \begin{align*}
  \text{Duplicator wins the congruence game for $u$}\ \  \Leftrightarrow\ \  u \in U.
  \end{align*}
   \end{lemma}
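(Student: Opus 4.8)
The plan is to prove the two implications of Lemma~\ref{lem:congruence-game} separately, with the ``$\Leftarrow$'' direction (if $u \in U$ then Duplicator wins) being the routine one and the ``$\Rightarrow$'' direction (if $u \notin U$ then Spoiler wins) being where the hypothesis that $L$ is not $\omega$-regular does the real work, via Theorem~\ref{thm:omega-congruence}.

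\textbf{The easy direction.} Suppose $u \in U$, so $u$ has $a$-labelled intervals of unbounded size. I would describe a winning strategy for Duplicator. After Spoiler plays a family $\Ww$ of pairwise disjoint intervals in step~1, Duplicator needs, in step~2, to choose $W_1 < V_1 < W_2 < V_2 < \cdots$ with the $W_i$ drawn from $\Ww$ and the $V_i$ lying inside $a$-labelled intervals. Since $\Ww$ is infinite and its members are disjoint, they have arbitrarily late starting positions, and since $u \in U$ there are $a$-blocks of every size arbitrarily far to the right; so Duplicator can interleave, greedily picking $W_i \in \Ww$, then an $a$-block $V_i$ to its right of length larger than $|W_i|$ (using unboundedness), then the next $W_{i+1} \in \Ww$ to the right of $V_i$, and so on. After Spoiler plays words $w_i$ with $|w_i| < |W_i| < |V_i|$ in step~3, Duplicator simply sets $v_i \eqdef w_i$ in step~4, which is legal because $|v_i| = |w_i| < |V_i|$. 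Then for any subsequence chosen in step~5 the two concatenations are literally equal, so the biconditional in step~6 holds trivially and Duplicator wins. (Note this only uses that the $V_i$ can be made longer than the corresponding $W_i$; it does not need the neutral letter, matching the remark before the lemma.)

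\textbf{The hard direction.} Suppose $u \notin U$, i.e.\ there is a bound $N$ such that every $a$-labelled interval of $u$ has length at most $N$. I want a winning strategy for Spoiler. The idea is that Duplicator is then forced to use words $v_i$ of bounded length, $|v_i| < |V_i| \le N$, so the $v_i$ range over the \emph{finite} set $\Sigma^{<N}$; Duplicator is effectively trying to match each $w_i$ against a boundedly-described word, and this should amount to exhibiting a finite-index $\omega$-congruence for $L$, contradicting Theorem~\ref{thm:omega-congruence}. Concretely: in step~1 Spoiler plays $\Ww$ so that the intervals $W_i$ are forced to be very long (e.g.\ the $n$-th interval of $\Ww$ has length $\ge n$, or length $\ge$ some fast-growing function), which guarantees that in step~3 Spoiler has room to insert \emph{any} finite word $w_i$ of his choosing into $w_i$'s slot. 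The winning condition, read contrapositively, says: if Duplicator wins, then for every choice of words $w_i \in \Sigma^*$ (subject only to the loose length constraints, which Spoiler's long $W_i$ render vacuous) and every subsequence $j_1 < j_2 < \cdots$, we have $w_{j_1}w_{j_2}\cdots \in L \Leftrightarrow v_{j_1}v_{j_2}\cdots \in L$, where each $v_i$ lies in the fixed finite set $\Sigma^{<N}$ and depends only on $w_i$ (because Duplicator must commit to $v_i$ in step~4 before seeing the subsequence in step~5, and the map $i \mapsto v_i$ can be taken to depend only on $w_i$ by symmetry/merging arguments). Define $w \sim w'$ iff Duplicator would assign them the same $v$-value; this is a finite-index equivalence relation (at most $|\Sigma^{<N}|$ classes), and the winning condition forces it to satisfy the recognition property~\eqref{eq:recognising-congruence}. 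By Lemma~\ref{lem:congruence}, $L$ would then admit a finite-index $\omega$-congruence, so by Theorem~\ref{thm:omega-congruence} $L$ would be $\omega$-regular --- contradiction. Hence Spoiler wins.

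\textbf{Where the difficulty lies.} The main obstacle is making the last paragraph precise: a priori Duplicator's choice of $v_i$ in step~4 may depend on the whole sequence $w_1, w_2, \ldots$, not just on $w_i$, and the winning condition only gives information along Spoiler-chosen subsequences, not for all sequences at once. I would handle this by a careful use of Spoiler's freedom: Spoiler can play the $w_i$ in a way that forces the relevant structure --- for instance, by repeating a fixed finite word $w$ in many coordinates, or by diagonalizing so that any ``inconsistency'' in how Duplicator treats two $\sim$-related words $w, w'$ gets exposed by an appropriate subsequence in step~5. Turning ``Duplicator wins against \emph{this} particular infinite family of Spoiler plays'' into ``the induced relation is a genuine finite-index $\omega$-congruence'' is the crux; once that is established, Theorem~\ref{thm:omega-congruence} and Lemma~\ref{lem:congruence} close the argument immediately. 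I expect the write-up to spend most of its effort organizing Spoiler's step~1 and step~3 choices (how long to make the $W_i$, which $w_i$ to play where) so that a single Spoiler strategy simultaneously pins down enough of Duplicator's behaviour to reconstruct the congruence.
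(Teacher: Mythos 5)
Your ``$\Leftarrow$'' direction is fine and is essentially the paper's argument (choose $|V_i|\ge|W_i|$, copy $v_i=w_i$). The problem is the ``$\Rightarrow$'' direction: you correctly reduce to the situation where Duplicator's responses $v_i$ range over a finite set, and you correctly identify the crux — that $v_i$ may depend on the whole play, and that the winning condition only speaks about subsequences of one particular play — but you do not supply the idea that resolves it. The shortcut you gesture at, that ``the map $i\mapsto v_i$ can be taken to depend only on $w_i$ by symmetry/merging arguments,'' is neither justified nor, as it turns out, needed; and ``diagonalizing to expose inconsistencies'' is left entirely unspecified. As written, the step from ``Duplicator wins this play'' to ``$L$ has a finite-index relation satisfying~\eqref{eq:recognising-congruence}'' is a genuine gap.

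The paper closes it as follows, without ever making Duplicator functional. In round 1 Spoiler plays $\Ww$ so that no interval length occurs infinitely often; hence the lengths of whatever intervals $W_i$ Duplicator selects tend to infinity, and in round 3 Spoiler can play a single fixed sequence $w_1,w_2,\ldots$ in which \emph{every} word of $\Sigma^*$ occurs infinitely often. Since $u\notin U$, Duplicator's responses $v_i$ lie in a finite set $F$, so by pigeonhole there is a function $f:\Sigma^*\to F$ such that for every $w$ the pair $(w_i,v_i)=(w,f(w))$ occurs for infinitely many $i$. The kernel of $f$ has finite index, so because $L$ is not $\omega$-regular, Lemma~\ref{lem:congruence} and Theorem~\ref{thm:omega-congruence} force a violation of~\eqref{eq:recognising-congruence}, which (since $\sim$-equivalent words have the same $f$-image) can be put in the form $u_1u_2\cdots\in L\not\Leftrightarrow f(u_1)f(u_2)\cdots\in L$. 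Round 5 is then exactly the device that converts ``infinitely often'' into an actual play: Spoiler picks $i_1<i_2<\cdots$ with $w_{i_n}=u_n$ and $v_{i_n}=f(u_n)$, so the two concatenations are literally $u_1u_2\cdots$ and $f(u_1)f(u_2)\cdots$, and Spoiler wins. In short, the missing mechanism is: one universal round-3 sequence, extraction of a recurring response pattern $f$ by pigeonhole, and use of the round-5 subsequence to realize the violating pair of $\omega$-words — rather than forcing Duplicator's strategy to induce a congruence.
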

\begin{proof}\ 

($\Leftarrow$) Assume $u \in U$. We will show a winning strategy for player Duplicator. Suppose that player Spoiler has chosen a family $\Ww$ in round 1. Since $u \in U$, intervals with only $a$-labelled positions have unbounded size, and therefore in round 2, player Duplicator can choose the intervals so that  $|V_i| \ge |W_i|$ for all $i$. For every choice of words $w_i$ made by player Spoiler in round 3, Duplicator's response in round 4 is to choose the words $v_i$ so that  $v_i = w_i$ for all $i$.  This guarantees victory for Duplicator, regardless of Spoiler's move in round 5. 
  
  ($\Rightarrow$)  Assume $u \not \in U$. We will show a winning strategy for player Spoiler. In round 1, Spoiler picks $\Ww$ so that  the lengths of the intervals tend to infinity, i.e.~no size appears infinitely often.  Let $W_i$ and $V_i$ be the intervals that are chosen in round 2 by player Duplicator. By choice of $\Ww$, the lengths of the intervals $W_i$ tend to infinity, while by assumption that $u \not \in U$, the lengths of the intervals $V_i$ are bounded. In round 3, Spoiler chooses the  words $w_i$ so that every word from $\Sigma^*$ appears infinitely often. This  can be done because the lengths of the intervals $W_i$ tend to infinity. Suppose that Duplicator chooses some words $v_i$ in round 4. Since the intervals $V_i$ have bounded size,   the words $v_i$ chosen by Duplicator come from a finite set $F\subseteq\Sigma^*$.   This means that for every  $w \in \Sigma^*$, there is some $v\in \Sigma^*$ such that infinitely often $w_i=w$ and $v_i=v$. Choose some function  
    \begin{align*}
    f : \Sigma^* \to F
    \end{align*}
     which realises the dependency $w \mapsto v$, i.e.~for every $w \in \Sigma^*$, 
    \begin{align}\label{eq:io-response}
       w = w_i \quad \text{and} \quad f(w)=v_i \qquad \text{for infinitely many $i$.}
    \end{align}
    Apply Lemma~\ref{lem:congruence} with $\sim$ being the kernel of $f$, i.e.~the equivalence relation that identifies two words if they have the same image under $f$.  Since $L$ is not $\omega$-regular, then by Lemma~\ref{lem:congruence} there must be a violation of~\eqref{eq:recognising-congruence}, i.e.~there must be words $u_1,u_2,\ldots$ such that 
\begin{align}\label{eq:non-equiv}
u_1 u_2 \cdots \in L \qquad  \not\!\!\!\!\!\iff\qquad  f(u_1) f(u_2) \cdots \in L.
\end{align}
By~\eqref{eq:io-response}, in Round 5,  Spoiler can choose the indices $i_1 < i_2 < \cdots$ so that 
\begin{align*}
u_n = w_{i_n} \qquad f(u_n) = f(v_{i_n}) \qquad \text{for all $n \in \set{1,2,\ldots}$}
\end{align*}
and therefore win thanks to~\eqref{eq:non-equiv}.
\end{proof}

The following corollary, and undecidability of the logic \msou, complete the proof of Theorem~\ref{th:mso el}.
\begin{corollary}\label{cor:u}
  If $L \subseteq \Sigma^\omega$ is not $\omega$-regular and contains a neutral letter, then \msou $\subseteq$  \msoplus$L$.
\end{corollary}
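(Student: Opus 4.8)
The plan is to show that $U$ is definable in \msoplus$L$, since \msou is exactly \mso extended with the language $U$ (Example~\ref{ex:u-language}) and \msoplus$U\subseteq$ \msoplus$L$ by the closure properties once $U$ is definable. By Lemma~\ref{lem:congruence-game}, it suffices to express in \msoplus$L$ the statement ``Duplicator wins the congruence game for $u$''. The strategy is to encode the whole play of the congruence game by a single set quantifier pattern: the family $\Ww$ in round~1, the interleaved intervals $W_i<V_i$ in round~2, and the subsequence $i_1<i_2<\cdots$ in round~5 are all choices of monadic objects (sets or sequences of intervals), which \mso can quantify over directly. The real content is rounds~3 and~4, where Spoiler and Duplicator pick \emph{words} $w_i,v_i$ rather than sets, and then membership of the infinite concatenations $w_{j_1}w_{j_2}\cdots$ and $v_{j_1}v_{j_2}\cdots$ in $L$ must be tested.

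Here is where the neutral letter is used. Instead of quantifying over a sequence of finite words $w_i$ with $|w_i|<|W_i|$, I would quantify over a single $\omega$-word $\bar w\in\Sigma^\omega$ that is laid out \emph{inside} the word $u$: the letters of $w_i$ are placed in the first $|w_i|$ positions of the interval $W_i$ (this is possible exactly because $|w_i|<|W_i|$), all positions of $u$ outside the chosen $W_i$'s, and the unused tail of each $W_i$, carry the neutral letter $1$. Since $L$ is stable under inserting and deleting neutral letters, the resulting $\omega$-word lies in $L$ if and only if $w_1w_2\cdots\in L$; and after restricting to the subsequence $i_1<i_2<\cdots$, deleting the words $w_i$ for $i\notin\{i_1,i_2,\ldots\}$ is again just deletion of arbitrarily-placed blocks, so one can arrange (by also making those blocks neutral, which is a further \mso-definable modification of $\bar w$) an $\omega$-word that is in $L$ iff $w_{j_1}w_{j_2}\cdots\in L$. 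The predicate $L(X_a,X_b,\ldots)$ of \msoplus$L$ then tests this membership. The same device handles Duplicator's words $v_i$ placed inside the $a$-intervals $V_i$ of $u$. Quantifier alternation of the game (Spoiler $\exists$, Duplicator $\forall$, Spoiler $\exists$, Duplicator $\forall$, Spoiler $\exists$) becomes the corresponding alternation of monadic second-order quantifiers over these encodings, with the final Duplicator-wins condition being the \msoplus$L$-formula ``$L$ of the $w$-encoding $\Leftrightarrow$ $L$ of the $v$-encoding''.

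The steps, in order: (i) fix the neutral letter $1$ of $L$ and set up the encoding of a finite word into a prefix of an interval of $u$, padded by $1$'s, and prove the neutral-letter lemma that such encodings preserve and reflect membership in $L$, including after deleting a subfamily of blocks; (ii) write the \mso formula asserting that a set codes a valid family $\Ww$ of pairwise disjoint intervals, that further sets code a valid round-2 choice $W_1<V_1<W_2<V_2<\cdots$ with the $V_i$ inside $a$-labelled regions of $u$, and that a set codes a valid subsequence $i_1<i_2<\cdots$; (iii) express, via the $L$-predicate applied to the two encodings, the winning condition from round~6; (iv) assemble these into the formula $\psi(u)\equiv$``Duplicator wins'', matching the quantifier prefix $\exists\Ww\,\forall(W_i,V_i)\,\exists(w_i)\,\forall(v_i)\,\exists(i_k)$ to ``Duplicator wins'' $=$ ``$\forall$ Spoiler move $\exists$ Duplicator response $\ldots$'' — being careful that the quantifier over $\Ww$ is universal (Spoiler moves first) and so on; (v) conclude by Lemma~\ref{lem:congruence-game} that $\psi$ defines $U$, hence $U$ is in \msoplus$L$, hence \msou $\subseteq$ \msoplus$L$.

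The main obstacle is step~(i): making the neutral-letter encoding robust enough that the infinite subsequence chosen in round~5 can be implemented purely by neutral-letter deletion, so that the single $L$-predicate evaluation genuinely computes membership of $w_{j_1}w_{j_2}\cdots$ rather than of all of $w_1w_2\cdots$. One has to be careful that ``infinitely many $w_i$ nonempty'' (the hypothesis in the definition of neutral letter) is met — i.e. Spoiler cannot be forced into a degenerate play where cofinitely many chosen words are empty — but the congruence game and Lemma~\ref{lem:congruence-game} already guarantee nontrivial plays are the relevant ones, and empty words contribute nothing to the concatenation anyway, so this is a minor point rather than a real difficulty. A secondary technical point is that intervals, subsequences, and the layout of word-letters inside intervals all have to be encoded by finitely many monadic predicates over $\omega$ in a mutually consistent way; this is routine \mso bookkeeping of the kind used throughout automata–logic correspondences, so I expect it to be lengthy but not deep.
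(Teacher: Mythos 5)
Your proposal is correct and follows essentially the same route as the paper's proof: reduce to showing $U$ is definable in \msoplus$L$ via Lemma~\ref{lem:congruence-game}, encode the five rounds of the congruence game by alternating monadic quantifiers (Spoiler universal, Duplicator existential), represent the words $w_i,v_i$ by colourings of the intervals padded with the neutral letter, implement the round-5 subsequence by turning non-selected blocks neutral, and test the winning condition with the $L$-predicate. Your treatment of the neutral-letter padding and the subsequence deletion is just a more explicit version of what the paper sketches.
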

\begin{proof}
  By~\cite[Theorem 1.3]{bojanczykUndecidabilityMSOUltimately2018}, it is enough to show that the language $U$ is definable in \msoplus$L$. By Lemma~\ref{lem:congruence-game}, it is enough to show that \msoplus$L$ can express that Duplicator has a winning strategy in the congruence game. 
  
  A family of disjoint intervals is represented by two sets of positions: the set $X$ of leftmost positions in the intervals, and the set $Y$ of rightmost positions in the intervals. The condition that all intervals  are disjoint means that 
  \begin{align*}
  \forall x_1, x_2 \in X\ x_1 < x_2 \ \Rightarrow \   \exists y \in Y \ x_1 \le y < x_2.
  \end{align*}
  Using this representation, the choices of intervals in rounds 1 and 2 can be represented by set quantification, with choices of player Duplicator using existential quantifiers and choices of player Spoiler using universal quantifiers.  The words $w_i$ chosen in round 3 are represented by colouring the intervals $W_i$ with letters from $\Sigma$; and using the neutral letter for positions not in the intervals $W_i$.  The same goes for round 4. The subsequence in round 5 is represented by a subset of leftmost positions in the intervals $W_i$. The winning condition in round 6 is checked by using the predicate for $L$.  
\end{proof}

\section{Boolean closed full trios}
\label{sec:full trios}
We finish the paper with a corollary of our main theorem, which is an analog of Theorem~\ref{thm:finite} for infinite words: if a Boolean-closed full trio of languages of infinite words contains at least one non-regular language, then it contains the entire arithmetic hierarchy, subject to a certain representation.

As mentioned in the introduction, a language of $\omega$-words is definable in \msoplus$L$ if and only if it belongs to the smallest class of languages that contains $L$, contains all $\omega$-regular languages, is closed under Boolean combinations, as well as images and inverse images under letter-to-letter homomorphisms. If we lift the restriction on homomorphisms being letter-to-letter, then we get a Boolean-closed full trio, as discussed below.

When applying a homomorphism that may erase some letters, an $\omega$-word can be mapped to a finite word. Therefore, in the presence of such homomorphisms, it makes sense to consider languages of words of length $\le \omega$, i.e.~words which are either finite or $\omega$-words. For such words, define a \emph{regular language}  to be a union of two languages: a regular language of finite words, plus a regular language of $\omega$-words.
Define a \emph{homomorphism} for words of length $\le \omega$ to be a function
\begin{align*}
h: \Sigma^{\le \omega} \to \Gamma^{\le \omega} \qquad 
\end{align*}
which is obtained by applying to each letter a function of type $\Sigma \to \Gamma^*$. 

\begin{theorem}\label{thm:full trios}
  Let $\Ll$ be a class of languages of words of length at most $\le \omega$ which is closed under:
  \begin{enumerate}
    \item images under homomorphisms; and
    \item inverse images under homomorphisms; and
    \item intersections with regular languages. 
    \end{enumerate}
    If $\Ll$ is additionally Boolean-closed (closed  under union and complementation) and it  contains at least one non-regular language, then for every $L \subseteq \Sigma^*$ in the  arithmetic hierarchy,
    \begin{align*}
    \underbrace{\set{wv^\omega : w \in \Sigma^*,  v \in L}}_{\text{loop representation of $L$}} \in \Ll.
    \end{align*}
\end{theorem}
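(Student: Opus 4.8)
The plan is to reduce, via the closure of $\Ll$, to the situation of Corollary~\ref{cor:u}, then to place the loop representations of recursively enumerable languages inside $\Ll$ by encoding computations in the unconstrained prefix of a loop representation, and finally to climb the arithmetic hierarchy using closure under complementation and homomorphic image. \emph{Reducing to \msou.} First I would record the folklore fact that a Boolean-closed full trio containing a non-regular language contains every regular language — here, every regular language of words of length at most $\omega$, and in particular every $\omega$-regular language. Next I would produce inside $\Ll$ a non-$\omega$-regular language of $\omega$-words with a neutral letter. Write the given non-regular $M \in \Ll$ as the union of $M \cap \Sigma^*$ and $M \cap \Sigma^\omega$; both lie in $\Ll$, being intersections of $M$ with the regular languages $\Sigma^*$ and $\Sigma^\omega$, and at least one of them is non-regular. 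If $M \cap \Sigma^\omega$ is not $\omega$-regular, put $N_0 := M \cap \Sigma^\omega$. Otherwise $M \cap \Sigma^*$ is a non-regular language of finite words, and taking the preimage under the homomorphism erasing a fresh letter $\#$, intersected with the regular language $\Sigma^*\#^\omega$, gives $N_0 := \set{v\#^\omega \st v \in M \cap \Sigma^*} \in \Ll$; this is not $\omega$-regular, since otherwise $\set{v \st v\#^\omega \in N_0}$ would be a regular language of finite words equal to $M \cap \Sigma^*$. In both cases, the preimage under the homomorphism erasing one further fresh letter $1$ gives $N \in \Ll$ which is a language of $\omega$-words (each of its words has infinitely many non-$1$ letters), is not $\omega$-regular (intersecting with $\Sigma^\omega$ recovers $N_0$), and has $1$ as a neutral letter. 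As $\Ll$ contains $N$ and all $\omega$-regular languages and is closed under Boolean combinations and letter-to-letter homomorphic images and preimages, it contains \msoplus$N$, hence by Corollary~\ref{cor:u} it contains \msou; in particular $U \in \Ll$.

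\emph{Loop representations of recursively enumerable languages.} Fix a recursively enumerable $R \subseteq \Sigma^*$, witnessed by a two-counter machine $\mathcal M$ with $R = \set{v \st \mathcal M \text{ halts from input } v}$. I would build a language $D \in \Ll$, over an auxiliary alphabet, of $\omega$-words of the form $\widetilde w\, (v\#)^\omega$ in which the prefix $\widetilde w$ encodes a pair $(w,c)$ with $w \in (\Sigma\cup\{\#\})^*$ unconstrained and $c$ a purported halting computation of $\mathcal M$ on $v$; the link to the suffix is imposed by requiring that the input read by $c$ is the period word $v$ and that the suffix really is $(v\#)^\omega$. Erasing $c$ by a homomorphic image then leaves exactly $\set{w(v\#)^\omega \st w \in (\Sigma\cup\{\#\})^*,\ v \in R}$, and erasing $\#$ by a further (erasing) homomorphic image leaves the plain loop representation $\set{wv^\omega \st w \in \Sigma^*,\ v \in R}$. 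The work, and the step I expect to be the main obstacle, is to show $D \in \Ll$: both the verification that $c$ is a faithful computation — counters in unary, counter tests certified through block lengths on which the $U$-predicate acts — and the verification that the suffix is a genuine $\omega$-power of the word processed by $c$ are non-$\omega$-regular conditions that must be expressed from the single non-regular ingredient $U$; this amounts to carrying out the computation encoding underlying the undecidability of \msou and of Theorem~\ref{thm:finite}, but in a loop-representation-compatible form.

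\emph{Climbing the arithmetic hierarchy.} It remains to pass from loop representations of recursively enumerable languages to loop representations of all arithmetic-hierarchy languages, the latter being the least class containing the recursively enumerable ones and closed under complementation and homomorphic image. The $\#$-marked form $\mathrm{LR}(K) := \set{w(v\#)^\omega \st w \in (\Sigma\cup\{\#\})^*,\ v \in K}$ with $\# \notin \Sigma$ is tailored for this: an ultimately periodic word whose eventual primitive period contains exactly one $\#$ has a unique period word of the shape $v\#$ with $v \in \Sigma^*$, so for every $K \subseteq \Sigma^*$ one has $\mathrm{LR}(\Sigma^*\setminus K) = \mathrm{LR}(\Sigma^*) \setminus \mathrm{LR}(K)$; since $\mathrm{LR}(\Sigma^*) \in \Ll$ (the previous step applied to the regular language $\Sigma^*$), closure of the marked loop representations under complementation follows from closure under Boolean combinations. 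Closure under homomorphic image is obtained by standard manipulations with the full-trio operations (extend the homomorphism letterwise, keep $\#$ fixed, and re-enlarge the prefix with an erasing preimage followed by a renaming). Iterating the core step with these two reductions gives $\mathrm{LR}(L) \in \Ll$, hence the plain loop representation $\set{wv^\omega \st w \in \Sigma^*, v \in L} \in \Ll$, for every $L$ in the arithmetic hierarchy, completing the proof.
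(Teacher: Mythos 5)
Your overall architecture is the same as the paper's: (i) manufacture from the given non-regular language a non-$\omega$-regular $\omega$-language with a neutral letter inside $\Ll$ and invoke Corollary~\ref{cor:u} to get \msou${}\subseteq\Ll$; (ii) put loop representations of recursively enumerable languages into $\Ll$; (iii) climb the arithmetic hierarchy using Boolean closure and homomorphic images. Steps (i) and (iii) are correct; in fact your $\#$-marked representation $\mathrm{LR}(K)=\set{w(v\#)^\omega \st v\in K}$, with the uniqueness of the period word $v\#$, handles the complementation step more explicitly than the paper does (the plain loop representation is not injective in $v$, so the naive set-complement identity fails there, and your marker fixes exactly that).

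The genuine gap is step (ii). The paper discharges it by citing \cite[Lemma~3.2]{bojanczyk2016}: for every recursively enumerable $L$ there is an \msou-definable $\omega$-language $K$, over an alphabet with a neutral letter, whose image under the neutral-letter-erasing homomorphism is the loop representation of $L$. You instead propose to define directly a language $D$ of words $\widetilde w\,(v\#)^\omega$ in which a prefix-encoded computation $c$ of a two-counter machine is verified to be faithful and the suffix is verified to be exactly $(v\#)^\omega$ for the input $v$ of $c$ -- and you yourself flag showing $D\in\Ll$ as ``the main obstacle'' without carrying it out. This is not a routine verification that can be waved through: the conditions you need (exact equality of infinitely many $\#$-delimited blocks with the word read by $c$, and exact counter bookkeeping) are not asymptotic conditions, and it is precisely because \msou cannot check such exact periodicity/equality constraints directly that the known encoding produces the loop representation only \emph{after} an erasing projection of a neutral-letter-padded language; that padding-plus-projection construction is the technical heart of the \msou undecidability machinery. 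So as written, the literal definition of $D$ inside $\Ll$ is doubtful, and in any case the step that carries all the difficulty is asserted rather than proved (or cited). Supplying a citation to \cite[Lemma~3.2]{bojanczyk2016} at this point, in place of the sketched encoding, would close the gap and make your argument essentially coincide with the paper's.
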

\begin{proof}
  By closure under inverse images of homomorphisms and under intersection with $\Sigma^*$ and $\Sigma^\omega$ for any $\Sigma$, if $\Ll$ contains some non-regular language, then it contains some non-regular $\omega$-language with a neutral letter.
  By Corollary~\ref{cor:u}, $\Ll$ contains all languages definable in  \msou. By~\cite[Lemma~3.2]{bojanczyk2016}, for every recursively enumerable language $L \subseteq \Sigma^*$,  the logic \msou   defines some $\omega$-language $K$, over an alphabet extended with a neutral letter,  such that 
  \begin{align*}
  \text{loop representation of $L$} = h(K),
  \end{align*}
  where $h$ is the homomorphism that eliminates the neutral letter.  Since $\Ll$ is closed under homomorphic images, it follows that $\Ll$ contains the loop representations of all recursive enumerable languages. For the arithmetic hierarchy, it is enough to observe that the class
  \begin{align*}
  \set{L \subseteq \Sigma^* : \text{$\Ll$ contains the loop representation of $L$}}
  \end{align*}
  is closed under Boolean combinations and homomorphic images.  
\end{proof}
In contrast to the finite word setting, we cannot conclude that $\Ll$
contains $L$ itself: Standard arguments show that if
a language $K\subseteq\Sigma^{\le\omega}$ has a finite-index right
congruence, then every language obtained from $K$ using Boolean full
trio operations also has a finite-index right congruence. Thus, for
example, if we start with $U$, all obtainable languages over finite
words are regular.

One could also consider Boolean closed full trios of
$\omega$-languages. Then, homomorphisms would be defined by functions
of type $\Sigma\to\Gamma^*$ and the (inverse) image of a homomorphism
on a subset of $\Sigma^\omega$ would contain only those resulting
words that are infinite. For this notion of (inverse) image,
Theorem~\ref{thm:full trios} follows with essentially the same proof.

\bibliographystyle{amsplain}

\bibliography{bibliography}



\end{document}